\documentclass[runningheads, envcountsame, a4paper]{llncs}
\usepackage{amsfonts,amssymb,amsmath}
\usepackage[T2A]{fontenc}
\usepackage[utf8]{inputenc}        

\usepackage{microtype}

\usepackage[hyphens]{url}
\usepackage{verbatim}



\let\leq\leqslant
\let\geq\geqslant

\def\LR{{\mathrm{LR}}}


\def\NN{\mathbb N}

\let\epsilon\varepsilon
\let\eps\varepsilon

\let\rendmarker\vartriangleleft

\def\A{{\cal A}}

\def\F{{\cal F}}





\usepackage{mathtools}
	\makeatletter
	\makeatother

	\let\lto\leftarrow
	\let\rto\rightarrow
	\let\uto\uparrow
	\let\dto\downarrow


\def\nnext{\mathsf{next}}
\def\prev{\mathsf{prev}}

\let\lendmarker\vartriangleright
\let\rendmarker\vartriangleleft

\title{A Linear-time Simulation of Deterministic $d$-Limited Automata}
  
\author{Alexander A. Rubtsov\orcidID{0000-0001-8850-9749}\thanks{Supported by Russian Science Foundation grant 20--11--20203}}
\institute{
 National Research University Higher School of Economics\\
\email{rubtsov99@gmail.com}
}
\authorrunning{A.~A.~Rubtsov}
\tocauthor{Alexander~A.~Rubtsov}

  \usepackage[makeroom]{cancel} 

\usepackage{mathrsfs}

\usepackage[normalem]{ulem}
\usepackage{color}
\usepackage{xcolor}

\usepackage{enumitem}

\usepackage{marginnote}

\definecolor{WildStrawberry}{HTML}{FF43A4}
\definecolor{DarkGray}{HTML}{666666}

\usepackage{ifthen}

\newcommand{\malert}[2][]{\marginpar{\color{WildStrawberry}\ifthenelse{\equal{#1}{}}{\vspace{-2ex}}{\vspace{-2ex}\vspace{#1}}  #2}}

\newcommand{\mcomment}[2][]{\marginpar{\ifthenelse{\equal{#1}{}}{\vspace{-2ex}}{\vspace{-2ex}\vspace{#1}} \color{DarkGray} #2}}
\newcommand\reduline{\bgroup\markoverwith
      {\textcolor{red}{\rule[-0.5ex]{2pt}{0.4pt}}}\ULon}

\usepackage{dashrule}

\newcommand\alertuline{\bgroup\markoverwith
      {\color{WildStrawberry}{\hdashrule[-0.5ex]{2pt}{0.4pt}{1pt}}}\ULon}

\newcommand\commentuline{\bgroup\markoverwith
			      {\color{DarkGray}{\hdashrule[-0.5ex]{2pt}{0.4pt}{1pt}}}\ULon}






	\usepackage{mathtools} 


\DeclareRobustCommand{\dvec}[1]{\ensuremath{\overleftrightarrow{#1}}}
\DeclareRobustCommand{\rvec}[1]{\ensuremath{\overrightarrow{#1}}}
\DeclareRobustCommand{\lvec}[1]{\ensuremath{\overleftarrow{#1}}}

\newcommand{\Gin}{\mathsf{in}}
\newcommand{\Gout}{\mathsf{out}}

\usepackage[linesnumbered]{algorithm2e}

\usepackage{tikz} 
\usetikzlibrary{automata, positioning, shapes, snakes, calc, arrows.meta, decorations.markings, matrix}


\usepackage{pgfplots}
\usepackage{pgfmath}

\tikzset{RbtsvGraph/.style={vertex/.style={circle,draw=black,fill=black,thick, inner sep=0pt, minimum size=3.5pt}, style={line width=0.4pt}}} 

\definecolor{xdxdff}{rgb}{0.48018607843137253,0.48018607843137253,1.}
\definecolor{qqqqff}{rgb}{0.,0.,1.}
\definecolor{ffffff}{rgb}{1.,1.,1.}
\definecolor{sqsqsq}{rgb}{0.12548018607843137,0.12548018607843137,0.12548018607843137}
\definecolor{wqwqwq}{rgb}{0.3764705882352841,0.3764705882352841,0.3764705882352841}
\definecolor{cqcqcq}{rgb}{0.7528411764705882,0.7528411764705882,0.7528411764705882}

\def\Gin{\mathsf{in}}
\def\Gout{\mathsf{out}}

\begin{document}

\maketitle
\setcounter{footnote}{0}

\begin{abstract}
A $d$-limited automaton is a Turing machine that may rewrite each input cell at most~$d$ times. 
Hibbard (1967) showed that for every $d \geq 2$ such automata recognize all context-free languages 
and that deterministic $d$-limited automata form a strict hierarchy. 
Later, Pighizzini and Pisoni proved that the second level of this hierarchy coincides 
with deterministic context-free languages (DCFLs). 
We present a linear-time recognition algorithm for deterministic $d$-limited automata in the RAM model, 
thereby extending linear-time recognition beyond DCFLs. 
We further generalize this result to deterministic $d(n)$-limited automata, where the bound $d$ may 
depend on the input length~$n$. 
In addition, we prove an $O(n \cdot k \cdot d(n) + m)$ bound for the membership problem, 
where the input includes both the word and the automaton's description, 
with $m$ denoting the size of the description and $k$ the number of states.
\end{abstract}

\section{Introduction}

Context-free languages (CFLs) play a central role in computer science. 
Their deterministic subclass (DCFLs) is especially important in compiler construction, where parsing is based on the connection between $\LR(1)$-grammars and deterministic pushdown automata (DPDA). 
In 1965 Knuth showed~\cite{Knuth:1965:LR} that $\LR(1)$-grammars generate exactly the class of DCFLs, and DPDAs provide linear-time parsing algorithms for them. 
Thus, DCFLs form a practically significant subclass of CFLs: they are recognizable in linear time, and $\LR(1)$-grammars admit linear-time construction of derivation trees. 
All these linear-time bounds, both in classical parsing theory and in this paper, are measured in the RAM model. 
We emphasize that the situation is different for Turing machines: Hennie showed~\cite{HENNIE1965} that any language recognizable in linear time on a Turing machine is regular, so stronger models such as RAM are required to capture the linear-time parsing of non-regular languages.

It is important to distinguish between two closely related problems. 
We follow the convention that in the \emph{recognition problem}, the language is fixed and the input is only the word~$w$, 
while in the \emph{membership problem}, both a description of the language (for instance, a context-free grammar) and the word~$w$ are given as input. 

According to this convention, the best known upper bound for the \emph{recognition problem} for context-free languages is $O(n^\omega)$, 
where $\omega \leq 2.373$ is the exponent of fast matrix multiplication. 
This bound was obtained by Valiant in 1975~\cite{Valiant75}, and his algorithm decides whether a given word belongs to the language generated by a fixed context-free grammar, 
but it does not construct a parse tree. 
Subsequent work~\cite{Lee02,AbboudFOCS15} confirmed that the same setting --- fixed grammar and variable input word --- is considered, 
and provided strong evidence that this upper bound is hard to substantially improve.

\subsection{$d$-Limited Automata and $d$-DCFLs}

We next recall the notion of $d$-limited automata, introduced by Hibbard~\cite{Hibbard67}. 
A $d$-limited automaton ($d$-LA) is a nondeterministic Turing machine that scans only the cells containing the input word together with end-markers, 
and is allowed to rewrite a symbol in each cell (except the end-markers) only during its first $d$ visits. 
Hibbard showed that for every $d \geq 2$ $d$-LAs recognize precisely the class of context-free languages; it is also known that $1$-LAs recognize exactly the class of regular languages~\cite{WW86}. 
For $d=\infty$, the model coincides with linear-bounded automata, which recognize the class of context-sensitive languages.

For the deterministic case, Pighizzini and Pisoni~\cite{Pig15} proved that deterministic $2$-LAs recognize exactly the deterministic context-free languages (DCFLs) 
by providing an algorithm that transforms a 2-LA into a PDA while preserving determinism; the inverse transformation had already been established by Hibbard~\cite{Hibbard67}. 
Following Hibbard, we call a language recognized by a deterministic $d$-LA a \emph{$d$-deterministic} language ($d$-DCFL) and denote such automata by $d$-DLAs. 
Hibbard also established that the hierarchy is strict: for every fixed $d$, the class of $(d\!+\!1)$-DCFLs properly contains the class of $d$-DCFLs. 

Although DCFLs are widely used, they suffer from certain practical limitations. 
First, they are not closed under reversal: while both 
\[
L_d = \{da^n b^n c^m \mid n,m \geq 0\}, \qquad 
L_e = \{ea^m b^n c^n \mid n,m \geq 0\}
\]
are DCFLs, their union is a DCFL, but the reversal $(L_d \cup L_e)^R$ is not. 
This language can still be recognized by a $3$-DLA, which can scan the input from right to left before simulating a $2$-DLA for $L_d \cup L_e$. 

Second, DCFLs are not closed under union. 
The language 
\[
L_{d,e} = \{a^n b^n c^m \mid n,m \geq 0\} \cup \{a^m b^n c^n \mid n,m \geq 0\}
\]
is a union of two DCFLs but is known to be inherently ambiguous~\cite{ShallitSC}. 
Every DCFL can be generated by an unambiguous grammar, in particular by an $\LR(1)$ grammar, and Hibbard showed~\cite{Hibbard67} that the same holds for $d$-DCFLs. 
Hence the union $\bigcup_{d \geq 1} d$-DCFLs does not cover all CFLs. 
This illustrates that $d$-DCFLs, like DCFLs, are not closed under union; recognizing such languages in linear time via $d$-DLAs requires parallelism.

\subsection{$d(n)$-Limited Automata}

We now consider a natural extension of $d$-LAs. 
Assume that $d$ is not a constant but a function $d(n)$ depending on the input length~$n$. 
The automaton can then rewrite the content of a cell until the number of visits to that cell reaches $d(n)$. 
To our knowledge, this is the first time such a generalization has been considered. 
In classical restrictions on Turing machine computation, the time bound is imposed on the total number of cell visits, 
whereas here we impose a bound on the number of visits per individual cell (after which the cell remains accessible for reading but no longer for rewriting). 

\subsection{Our Contribution}\label{subsect:OurContribution}

We focus on the membership problem for $d(n)$-DLAs. 
Let $m$ be the length of the description of a $d(n)$-DLA~$\A$, 
$k$ the number of its states, 
and $n$ the length of the input word~$w$. 
We present an $O(n \cdot k \cdot d(n) + m)$ algorithm in the RAM model for the membership problem. 
In particular, when $\A$ is fixed and $d(n)=O(1)$ (i.e., for $d$-DLAs), this yields a linear-time algorithm. 
Thus every $d$-DCFL is recognizable in linear time in the RAM model.

Hennie proved in~\cite{HENNIE1965} that every language recognizable in linear time by a deterministic Turing machine is regular 
(and a more general result holds for nondeterministic TMs~\cite{TADAKI2010,PigJALC09}). 
Hence no $d$-DLA with $d \geq 2$ can be simulated by a linear-time Turing machine. 
Guillon and Prigioniero~\cite{Guillon19} showed that every $1$-DLA can be transformed into an equivalent linear-time Turing machine 
(and an analogous result holds for nondeterministic $1$-LAs), 
and a related construction was also used by Kutrib and Wendlandt~\cite{KutribWendlandt:dLA}. 
Their approach relies on Shepherdson’s classical simulation of two-way DFAs by one-way DFAs~\cite{Shepherdson}. 
We build on this idea as well, but it cannot be applied directly because of Hennie’s result: 
otherwise one could simulate a non-regular language in linear time on a Turing machine, contradicting the theorem. 

To overcome this obstacle we transform a classical Turing machine into one that operates on a doubly-linked 
list instead of a tape and adapt Shepherdson’s construction to this model. 
This forms the basis of our membership algorithm. 
We also reinterpret Birget’s algebraic constructions~\cite{Birget89} in graph-theoretic terms, 
which provides subroutines underlying the final version of our algorithm. 
A related algebraic approach was developed by Kunc and Okhotin~\cite{KuncOkhotin2011}, 
who employed transformation semigroups to capture, for each substring, 
the state-to-state behavior of two-way deterministic finite automata. 
This line of work closely mirrors Birget’s method via function composition, 
and our construction follows the same underlying ideas.

We further establish an upper bound of $O(d(n)\cdot n^2)$ steps for direct simulation of $d(n)$-DLAs, 
provided the computation does not enter an infinite loop. 
In particular, this implies an $O(n^2)$ upper bound for $d$-DLAs, which, to the best of our knowledge, was previously open. 
This bound is tight: it is witnessed by the classical $d$-DLA recognizing the language $\{a^n b^n \mid n \geq 0\}$.

From a theoretical perspective, our results show that some CFLs are easy (linear-time recognizable), 
while in general recognition of CFLs may require $O(n^\omega)$ time, 
and by conditional results there must exist hard CFLs (recognizable only in superlinear time). 
We discuss this point in the following subsection.

\subsection{Related Results}

We begin with linear-time recognizable subclasses of context-sensitive languages (CSLs) and context-free languages (CFLs). 
E.~Bertsch and M.-J.~Nederhof~\cite{RegDCFL} showed that 
a nontrivial subclass of CFLs, the regular closure of DCFLs, is recognizable in linear time. 
This class consists of all languages obtained by taking a regular expression and replacing each symbol with a DCFL. 
It evidently contains the aforementioned language $L_{d,e}$ (as a union of DCFLs), so it is a strict extension of DCFLs. 
Note also that $L_{d,e}$ is recognizable by 2-DPDAs.

A broad subclass of CSLs recognizable in linear time is given by the class of languages accepted by two-way 
deterministic pushdown automata (2-DPDAs). 
A linear-time simulation algorithm for 2-DPDAs was obtained by S.~Cook~\cite{Cook1970_2DPDA}. 
This class clearly contains DCFLs, and it also includes the language of palindromes over an alphabet of at least two letters, 
which is a well-known example of a context-free language that is not a DCFL. 

A.~Rubtsov and N.~Chudinov introduced in~\cite{DBLP:conf/mfcs/RubtsovC24,DBLP:journals/corr/abs-2406-14911} 
a computational model, DPPDA, for Parsing Expression Grammars (PEGs). 
This model extends DCFLs, remains recognizable in linear time, and is based on a modification of classical pushdown storage. 
It was also shown that the class of languages recognized by 2-DPPDAs is recognizable in linear time. 
Moreover, they proved that parsing expression languages (the class generated by PEGs) contain a highly nontrivial subclass, 
namely the Boolean closure of the regular closure of DCFLs.

It remains open whether 2-DPDAs, 2-DPPDAs, or PEGs recognize all CFLs. 
However, the works of L.~Lee~\cite{Lee02} and Abboud et al.~\cite{AbboudFOCS15} provide strong evidence 
that this is very unlikely due to complexity-theoretic considerations: 
any CFG parser with time complexity $O(g n^{3-\eps})$, where $g$ is 
the size of the grammar and $n$ the input length, can be efficiently 
converted into an algorithm for multiplying $m \times m$ Boolean matrices in time $O(m^{3-\eps/3})$. 
This naturally raises the question: can 2-DPDAs or 2-DPPDAs simulate $d$-DLAs for $d \geq 3$? 

Finally, T.~Yamakami presented another extension of Hibbard’s 
approach~\cite{yamakami2021behavioralstrengthsweaknessesvarious,DBLP:conf/dcfs/Yamakami25}: 
in several models the input tape is one-way read-only, while the work tape obeys a similar restriction, 
forbidding rewriting beyond the first $d$ visits. 
We leave the application of our simulation technique to such models as a direction for future research.

\section{Definitions}

In this section we give precise definitions of the computational models used in the paper. 
We begin with Hibbard’s original model of $d$-limited automata, introduced in~\cite{Hibbard67}. 
We provide below a concise definition; a more formal equivalent definition can be found in~\cite{Pig15}. 
Next, we introduce a modified variant, called the \emph{deleting automaton}, 
in which the tape is replaced by a doubly linked list and cells may be deleted; 
this auxiliary model is central to our simulation technique. 
Finally, we define $d(n)$-limited automata, a natural extension of 
$d$-limited automata where the rewriting bound depends on the length of the input word. 
For clarity, we formulate our simulation algorithm for the case of a fixed constant~$d$, 
since the extension to $d(n)$ is straightforward.

\subsection{$d$-Limited Automaton}

Let $d \geq 0$ be a fixed integer. 
A deterministic \emph{$d$-limited automaton} ($d$-DLA) is a deterministic single-tape Turing machine 
whose tape initially contains the input word bordered by the left endmarker~$\lendmarker$ 
and the right endmarker~$\rendmarker$. 
Each tape symbol is annotated with an integer in $\{0,\dots,d\}$, called its \emph{rank}. 
Initially, all letters of the input word have rank~$0$, while the endmarkers have rank~$d$. 
Whenever the head visits a cell containing a symbol of rank $r < d$, 
the symbol may be overwritten with a new symbol of rank $r'$ such that $r < r' \leq d$. 
Symbols of rank~$d$ are read-only and cannot be changed. 

Formally, a $d$-DLA $\A$ is defined by a tuple
\[
\A = (Q,\,\Sigma,\,\Gamma,\,\delta,\,q_0,\,F),
\]
where $Q$ is a finite set of states, $\Sigma$ is the input alphabet (symbols of rank~$0$), $\Gamma$ is the tape alphabet with $\Sigma \cup \{\lendmarker, \rendmarker\} \subseteq \Gamma$, 
for each $r$ we let $\Gamma_r \subseteq \Gamma$ denote the set of symbols of rank $r$, 
$q_0 \in Q$ is the initial state, $F \subseteq Q$ is the set of accepting states, 
and $\delta$ is the transition function
\[
\delta : Q \times \Gamma \to Q \times \Gamma \times \{\lto,\rto\},
\]
such that:
\begin{itemize}
    \item for $a_r \in \Gamma_r$ with $r < d$, each transition is of the form 
    \[
    \delta(q, a_r) = (q', a_{r'}, m),
    \]
    where $a_{r'} \in \Gamma_{r'} \setminus \{\lendmarker,\rendmarker\}$, $r < r' \leq d$, and $m\in \{\lto,\rto\}$;
    \item for $a_d \in \Gamma_d$, each transition is of the form 
    \[
    \delta(q,a_d) = (q',a_d,m),
    \]
    where $m \in \{\lto,\rto\}$; moreover, if $a_d = \lendmarker$ then $m = \rto$, 
    and if $a_d = \rendmarker$ then $m = \lto$.    
\end{itemize}

A $d$-DLA starts in state $q_0$ with the head positioned on the first input symbol 
(immediately to the right of the left endmarker $\lendmarker$). 
At each step, given a state $q$ and the symbol $a$ under the head, 
it computes $\delta(q,a) = (q',a',m)$, replaces $a$ with $a'$, updates its state to $q'$, 
and moves the head left or right according to $m$. 
The input is accepted if the automaton reaches a state $q_f \in F$
 while scanning the right endmarker~$\rendmarker$; in this case the computation halts. 

We slightly modify Hibbard’s original definition by requiring the transition function to be total. 
This modification does not change the class of recognizable languages and comes at the cost of a single additional state.

\subsection{Deterministic Linked-List Automaton}

We next introduce an auxiliary model, which we call the 
\emph{Deterministic Linked-List Automaton} (DLLA). 
In this model there is no constraint on the number of visits to a cell. 
The tape is replaced by a doubly linked list, so the automaton may delete any cell between the endmarkers 
(but never the endmarkers themselves). 
Formally, a DLLA has the same components as a $d$-DLA (states, input and tape alphabets, initial and accepting states), 
but with a modified transition function:
\[
\delta : Q \times \Gamma \to 
  \bigl(Q \times (\Gamma \cup \{\perp\}) \times \{\lto,\rto\}\bigr) \cup \{\uto\},
\]
where the special symbol~$\perp$ indicates that the current cell is to be deleted immediately after the head leaves it. 
Once a cell is deleted, the head moves directly from its left neighbor to its right neighbor when moving right, 
and symmetrically when moving left. 
If the transition function returns $\uto$, the computation halts and the input is rejected. 
No additional restrictions are imposed on the transition function.

It is easy to see that DLLAs recognize exactly the class of deterministic context-sensitive languages.
They can simulate deterministic linear-bounded automata (DLBA) directly, and conversely, a DLBA can simulate
a DLLA by marking a deleted cell with a special symbol. 
We employ the doubly linked list representation in order to achieve the claimed upper bounds 
for deterministic $d(n)$-DLAs.

\subsection{$d(n)$-Limited Automaton}

For $d$-DLAs it is convenient to associate with each letter $a_r$ its rank $r$, 
representing the number of visits to the corresponding cell. 
For $d(n)$-DLAs this is no longer possible, since the alphabet and the machine 
description are fixed and cannot grow with the input length. 
Instead, in a $d(n)$-DLA each tape cell maintains its own visit counter, 
so the rank is associated with the cell rather than the symbol. 
Formally, each cell (except the endmarkers $\lendmarker, \rendmarker$) contains a pair $(a,e)$, 
where $a \in \Gamma$ and $e$ is a bit. 
The bit $e$ is initially $0$, and remains $0$ while the number of visits to the cell is less than $d(n)$; 
once the number of visits reaches $d(n)$, $e$ is set to $1$, and from that point on the cell becomes read-only. 

This modification of $d$-DLAs does not affect our simulation algorithm. 
The value of $d(n)$ is fixed for a given input length $n$, and the algorithm never relies on $d$ being a 
constant rather than a precomputed parameter. 
Therefore, it suffices to describe the simulation algorithm for $d$-DLAs; the extension to $d(n)$-DLAs is straightforward.


\section{Linear-Time Simulation Algorithm}


In this section we present a linear-time simulation algorithm for deterministic $d$-limited automata (DLAs). 
We begin with the \emph{recognition problem}, where the automaton $\A$ is fixed and only the input word $w$ of length $n$ varies. 
Later we extend the construction to the \emph{membership problem}, where both the automaton and the word are part of the input. 

\medskip
\noindent\textbf{High-level idea.} 
Our approach is inspired by Shepherdson’s classical simulation of two-way deterministic finite automata by one-way DFAs~\cite{Shepherdson}. 
The key observation is that whenever a $d$-DLA $\A$ produces a block of cells all of rank $d$, 
the precise contents of these cells are no longer relevant: 
what matters is only how $\A$ can enter and leave this block. 
We therefore compress each such \emph{maximal block} into a single cell 
containing a compact \emph{mapping} that summarizes the block’s effect on the computation. 
If two adjacent blocks are compressed, their mappings can be merged by composition. 

To implement this idea, we simulate $\A$ by a deterministic linked-list automaton $M$ 
that uses a doubly linked list in place of a tape. 
The machine $M$ performs two types of steps:
\begin{itemize}
  \item \emph{$\A$-moves}, which directly simulate moves of $\A$ on symbols of rank $<d$;
  \item \emph{technical moves}, which occur when $M$ encounters a compressed block. 
        In this case $M$ consults the mapping stored in the corresponding cell to decide how $\A$ would leave the block, 
        and moves accordingly.
\end{itemize}
In this way, long stretches of redundant $d$-rank cells are collapsed into constant-size summaries, 
ensuring that each cell contributes only a bounded number of times to the overall running time.

\medskip
\noindent
In the remainder of this section we first present the simulation algorithm for recognition, 
together with a correctness proof and amortized analysis. 
We then introduce the technical machinery of mappings and composition, 
which allows us to extend the algorithm to the membership problem and to prove the claimed bound.


\subsection{Preparations}

\paragraph{Directed states.} For convenience, we write $\rvec{p}$ (resp. $\lvec{p}$) to denote a state $p$
entered from the left (resp. right). 
Formally, if $\delta(q, X) = (p, Y, m)$ with $m \in \{\lto,\rto\}$,
we abbreviate it as $\delta(q, X) = (\dvec{p}, Y)$, 
where we write $\dvec{p} = (p,m)$. 
We call such pairs \emph{directed states}, and use this notation 
for both $d$-DLAs and DLLAs. 
We write
\[
\lvec{Q} = Q \times \{\lto\},\quad 
\rvec{Q} = Q \times \{\rto\},\quad 
\dvec{Q} = \lvec{Q} \cup \rvec{Q},\]
\[A_{\uto} = A \cup \{\uto\}\;\;\text{for } A \in \{\lvec{Q},\rvec{Q},\dvec{Q}\}.
\]

\paragraph{Mappings.}
The key idea is to collapse long segments of rank-$d$ cells into a 
single object. 
When $\A$ makes the $d$-th visit to a cell and writes a symbol of 
rank $d$, we replace that cell by a \emph{mapping} $f$. 
Formally, a mapping is a function
\[
f : \dvec{Q} \to \dvec{Q}_{\uto};
\]
it specifies, given the entry state and the \emph{entry direction}, 
the exit state and the \emph{exit direction} 
when $\A$ leaves the block (or $\uto$ if it never does).
For a mapping describing a single rank-$d$ cell, 
the entry direction is irrelevant, 
but for multi-cell segments it matters.

\paragraph{Segment traversal.} 
We denote by $W_\A[i]$ the $i$-th cell of $\A$'s tape and
by \[W_\A[l,r] = W_\A[l]\cdots W_\A[r-1]\] the segment of $\A$'s tape.
When we say that the head \emph{enters the segment}~$W_\A[l,r]$ 
\emph{in the directed state} $\dvec{q}$
we mean that $\dvec{q} = \rvec{q}$ corresponds to the head entering $W_\A[l]$ 
from the left, and $\dvec{q} = \lvec{q}$ corresponds to entering $W_\A[r-1]$ 
from the right. 
Symmetrically, $\dvec{p} = \rvec{p}$ means that the head exits 
through $W_\A[r-1]$ to the right, and $\dvec{p} = \lvec{p}$ 
means it exits through $W_\A[l]$ to the left,
and the head \emph{leaves the segment}~$W_\A[l,r]$
 \emph{in the directed state} $\dvec{p}$.

\paragraph{Segment description mappings.} We say that a mapping $f$ \emph{describes} a segment $W_\A[l,r]$ if
\begin{itemize}
    \item all letters in~$W_\A[l,r]$ have rank $d$ 
    \item $f(\dvec{q}) = \dvec{p}$ if the head enters the 
          segment $W_\A[l,r]$ in a directed state $\dvec{q}$, 
          it leaves the segment in a directed state $\dvec{p}$.
    \item $f(\dvec{q}) = \uto$ if the head enters the 
          segment $W_\A[l,r]$ in a directed state $\dvec{q}$, 
          it never leaves the segment (i.e., the computation loops inside).
\end{itemize}

We denote the set of all possible mappings as
\[\F = \{ f \mid f :  \dvec{Q} \to \dvec{Q}_{\uto}\}\]

\paragraph{Operations with mappings.} Let $f$ and $g$ describe segments $W_\A[L,r]$ and $W_\A[r,R]$ respectively. 
We define the \emph{directed composition} $\diamond$ of mappings
 by setting $h = f \diamond g$ whenever $h$ describes 
 the concatenated segment $W_\A[L,R]$.
Assume now that the head either enters the segment $W_\A[r,R]$
in the directed state $\rvec{q}$, or enters $W_\A[L,r]$
in $\lvec{q}$.  We define the \emph{departure function}
\[D : \F\times\F\times\dvec{Q} \to \dvec{Q}_{\uto}, \]
which, given mappings $f, g$ and an entry state $\dvec{q}$, returns the directed state $\dvec{p}$ such that
 $\A$ leaves the concatenated segment $W_\A[L,R]$ in state $p$ and direction $\dvec{p}$.
 If the head never leaves $W_\A[L,R]$, then $D(f,g,\dvec{q}) = \uto$.
 
Finally, we denote by 
$CF: \Gamma_d \to \F$  a function that
for each symbol $X \in \Gamma_d$, 
returns the mapping $CF(X)$ describing the one-cell segment consisting only of~$X$.

Since $Q$ is finite, the set $\F$ is finite as well. 
Therefore all the functions~$CF$, $D$, and the operation~$\diamond$
are computable in constant time for a fixed automaton $\A$,
because the number of mappings is finite. 
In the general case (varying $\A$) we later show that they are computable in $O(|Q_\A|)$.  
The operation $\diamond$ and the departure function $D$ are well-defined; 
this follows directly from the definitions,
and we will later give a formal justification.

 
\subsection{Recognition Problem}

We now present a high-level simulation algorithm for the recognition problem.
Detailed constructions of the subroutines will be given later in 
Subsection~\ref{subsect:Membership}, where the membership problem is analyzed. 
The algorithm below both formally defines the DLLA~$M$ that simulates~$\A$, 
and at the same time serves as the procedure for simulating $M$ on a RAM model.

Simulation Algorithm~\ref{Fig::SimulationAlgorithm} is given in pseudocode;  
its description in natural language is as follows. 
Since the DLLA $M$ deletes cells during its run, we refer to the current 
left and right neighbors of the $i$-th cell as $i.\prev$ and $i.\nnext$, respectively.
When we write $j = i.\prev$ we assume that both $i$ and $j$ refer to the indices 
of $\A$’s tape cells; we do not reenumerate the cells of $M$’s tape.

Thus, we denote the $i$-th cell of $M$’s tape by $W_M[i]$. 
If $W_\A[i]$ contains a letter of rank less than $d-1$, then $W_M[i] = W_\A[i]$,
and $M$ behaves exactly as $\A$ (an $\A$-move). 
When $M$ visits a cell for the $d$-th time,
where $\A$ would write a symbol~$X$ of rank~$d$ (not an end-marker), 
$M$ instead writes to that cell a mapping $g = CF(X)$. 
When $M$ writes a mapping $g$ in a cell $i$ for the first time,
it scans the neighbors $i.\prev$ and $i.\nnext$ and performs the procedure 
we call a \emph{deletion scan}. 

\begin{itemize}
  \item If none of the neighbors contains a mapping, the scan is finished.
  \item If only one of the neighbors $i.\prev$ or $i.\nnext$ contains a mapping 
  (say $f$ or $h$), then $M$ replaces $i$ with $f \diamond g$ (or $g \diamond h$) 
  and deletes the neighboring cell.
  \item If both neighbors contain mappings, then $M$ replaces $i$ with 
  $f \diamond g \diamond h$ and deletes both neighboring cells.
\end{itemize}

After a deletion scan the cell $W_M[i]$ contains the resulting mapping, say $g$, 
while both its neighbors contain letters. 
Hence $g$ describes the segment $W_\A[i.\prev+1,\, i.\nnext]$. 
$M$ then moves the head to the same neighbor ($i.\prev$ or $i.\nnext$) 
where $\A$ would be after leaving the rank-$d$ segment $W_\A[i.\prev+1,\, i.\nnext]$.
If the head of $\A$ had not quit the segment $W_\A[i.\prev+1,\, i.\nnext]$
right after visiting $W_\A[i]$, the cell where $\A$ arrives after it exits 
the segment is determined via the departure function~$D$ during the deletion scan.

We have thus described the cases in which $M$ arrives: 
\begin{itemize}
    \item at a cell of rank $r < d$ (from any neighbor), and
    \item at a cell of rank $d$ (from another rank-$d$ cell).
\end{itemize}

It remains to describe the case when $M$ arrives at a cell containing 
a mapping~$f$ in a directed state~$\dvec{q}$, coming from a cell 
with a letter of rank $r < d$ 
(lines~\ref{alg::Fmove:begin}--\ref{alg::Fmove:end}). 
In this case $M$ computes $f(\dvec{q}) = \dvec{p}$ and moves the head to 
the left or right neighbor according to the direction of $\dvec{p}$, 
arriving at that neighbor in state~$p$.

\begin{figure}[h]

\SetKw{Or}{or}
\SetKw{And}{and}

\begin{algorithm}[H]
	$\dvec{q} := \lvec{q_0}$;\quad $i := 1$\;		
	\SetKwInOut{Input}{Input}\SetKwInOut{Output}{Output}
	\SetKwProg{Fn}{Function}{ :}{end}	
		\SetKwFunction{KwPrint}{print}

	\While{ no result returned }{	
		\uCase(\tcc*[f]{$\A$-move}){ $W_M[i] \in \Gamma_r \cup \{\lendmarker, \rendmarker\}, r < d-1$}{\label{alg::Amove:begin}	
			$(\dvec{p}, a_{r^\prime}) := \delta_\A(q, W_M[i])$;\,\,
			$W_M[i] := a_{r^\prime}$\;\label{alg::Amove:end}
		}
		\uCase(\tcc*[f]{Deletion scan}){$W_M[i] \in \Gamma_{d-1}$}{		
		 $(\dvec{p}, X) := \delta_\A(q, W_M[i])$\;
		 $g := CF(X)$ \tcc*{ $\langle CF \rangle(\A) $ }
		 \If{ $i.\prev > 0$ \And $W_M[i.\prev] \in \F $}{
		 	$f := W_M[i.\prev]$\;
			\lIf{$\dvec{p} = \lvec{p}$}{$\dvec{p} := D(f,g,\lvec{p})$}
			\lIf{ $\dvec{p} = \,\uto$ }{ \KwRet Reject}
			$ g := f\diamond g  $;\,\,
			$i.\prev := (i.\prev).\prev$\tcc*{ $\langle \diamond \rangle(\A)$.}
			  
		 }
		 \If{ $i.\nnext < n+1$ \And $W_M[i.\nnext] \in \F $}{
		 	$h := W_M[i.\nnext]$\;
			\lIf(\tcc*[f]{ $\langle D \rangle(\A)$.}){$\dvec{p} = \rvec{p}$}{ 
				$\dvec{p} = D(g, h,\rvec{p})$}
			\lIf{ $\dvec{p} = \,\uto$ }{ \KwRet Reject}
			$ g :=  g \diamond h  $;\,\,
			$i.\nnext := (i.\nnext).\nnext$\tcc*{ $\langle \diamond \rangle(\A)$.}			  
		 }
		 $W_M[i] := g$\;
		}\Case{$W_M[i] \in \F$}{ \label{alg::Fmove:begin} 
			$f := W_M[i]$\;	
			\leIf{ $f(\dvec{q}) = \,\uto$ }{ \KwRet Reject}
			{$\dvec{p} = f(\dvec{q})$}
		}\label{alg::Fmove:end}
		\leIf{ $\dvec{p} = \lvec{p}$ }{ $i := i.\prev$ }{$i := i.\nnext$}
		$\dvec{q} := \dvec{p}$\;
		\lIf{ $q \in F$ \And $W_M[i] = \,\rendmarker$ }{ \KwRet Accept} 	  
	}	
    \caption{Simulation Algorithm}\label{Fig::SimulationAlgorithm}
\end{algorithm}

\end{figure}

Before proving correctness we fix notation for time indices. 
Let $W_\A^t[i]$ denote the content of cell $i$ of $\A$'s tape after $t$ steps 
of $\A$, and let $W_M^{t'}[i]$ denote the content of cell $i$ of $M$'s tape 
after $t'$ steps of $M$. 
We consider only \emph{regular steps}, i.e. steps performed on symbols of rank $< d$ 
or on endmarkers. 
Every regular step $t$ of $\A$ has a corresponding regular step $t'$ of $M$, 
and the mapping $t \mapsto t'$ is strictly increasing 
(if $t_1 < t_2$ then $t'_1 < t'_2$). 
This correspondence will be used below.

\begin{lemma}\label{Th::Correctness}
For each $d$-DLA $\A$, the corresponding DLLA $M$ simulates $\A$. 
More precisely, there exists an order-preserving correspondence $t \mapsto t'$ such that:
\begin{itemize}
  \item[(i)] for every regular step $t$ of $\A$ with the corresponding step $t'$ of $M$, 
  if $\A$ visits cell $i$ with a symbol of rank $< d$ or an endmarker, 
  then $W_\A^t[i] = W_M^{t'}[i]$;
  \item[(ii)] at steps $t$ and $t'$ $\A$ and $M$ 
              are in the same state when arriving at cell $i$; 
  \item[(iii)] $M$ accepts an input iff $\A$ does.
\end{itemize}
\end{lemma}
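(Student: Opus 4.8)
The plan is to prove all three statements simultaneously by induction on the number of regular steps, establishing the correspondence $t \mapsto t'$ incrementally. The core invariant I would maintain is stronger than the lemma statement alone: at the moment $M$ arrives at a cell $i$ in directed state $\dvec{q}$ (either because it just finished an $\A$-move, a deletion scan, or an $\F$-move), the following holds: (a) $\A$, at the corresponding regular step $t$, is also arriving at cell $i$ in the same directed state $\dvec{q}$; (b) for every cell $j$ that currently holds a letter in $W_M$, we have $W_M[j] = W_\A^t[j]$ and that letter has rank $<d$ in case $j=i$ or rank $\leq d$ in general; (c) for every cell $j$ that currently holds a mapping $f \in \F$ in $W_M$, the mapping $f$ correctly \emph{describes} the segment $W_\A[j.\prev+1,\, j.\nnext]$ in the sense of the ``segment description mappings'' definition, and moreover the cells strictly between $j.\prev$ and $j.\nnext$ in $\A$'s tape all have rank exactly $d$ and their contents are irrelevant; (d) the relative order of surviving cells is preserved. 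Statement (iii) will then follow because both machines halt-and-accept precisely when they scan $\rendmarker$ in a state of $F$, and the endmarkers are never compressed (they have rank $d$ but are explicitly excluded from being written as mappings).

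The inductive step splits into the three cases of the \textbf{while} loop. The $\A$-move case (rank $<d-1$) is immediate: $M$ literally applies $\delta_\A$, so (a), (b), (c), (d) are trivially preserved; the only subtlety is that $\A$ might, between the previous regular step and this one, have performed non-regular steps on rank-$d$ cells, but by part (c) of the invariant those are exactly the steps $M$ shortcut through a mapping, so the correspondence $t\mapsto t'$ skips them consistently. The $\F$-move case (lines~\ref{alg::Fmove:begin}--\ref{alg::Fmove:end}): $M$ reads $f = W_M[i]$ and by invariant (c) $f$ describes the rank-$d$ segment $\A$ is about to enter; by definition of ``describes,'' $f(\dvec{q})$ tells us exactly the state and direction in which $\A$ leaves that segment (or $\uto$ if it loops, in which case $M$ correctly rejects and $\A$ loops forever, so neither accepts). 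The cell $M$ arrives at next is by (d) the correct $\A$-cell, and (b),(c) are undisturbed since nothing was written. The delicate case is the \textbf{deletion scan} (rank $d-1$): here $\A$ writes a rank-$d$ symbol $X$ and $M$ writes $g = CF(X)$. I must argue that $CF(X)$ correctly describes the one-cell rank-$d$ segment (this is the defining property of $CF$), and then that merging with a neighboring mapping $f$ (resp.\ $h$) via $f \diamond g$ (resp.\ $g\diamond h$) yields a mapping describing the concatenated segment --- which is exactly the defining property of $\diamond$. The reindexing $i.\prev := (i.\prev).\prev$ correctly extends the summarized segment, so invariant (c) for the merged cell reads off the definitions of $\diamond$. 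Crucially, once $M$ has built the final mapping $g$ in cell $i$, it must move the head to the neighbor where $\A$ actually goes after entering $W_\A[i]$ at rank $d$; if $\A$ leaves the just-formed segment on the \emph{same} side it entered, $g(\dvec p)$ (i.e.\ $f(\lvec p)$ or $h(\rvec p)$ after one composition) suffices, but if $\A$ reenters the segment from the far side we need the departure function $D(f,g,\dvec q)$ or $D(g,h,\dvec q)$, which is precisely defined to answer ``where does $\A$ exit the concatenated block given it first entered the newly-written cell going in direction $\dvec q$.'' Verifying this bookkeeping matches the algorithm line-by-line is the main obstacle.

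The genuinely hard part, then, is the deletion-scan case, and within it two points deserve care. First, the \emph{timing} of the correspondence: when $\A$ writes the rank-$d$ symbol at cell $i$ and then wanders through the (now maximal) rank-$d$ block, it performs several non-regular steps before it next touches a rank-$<d$ cell; the claim is that the regular step at which $\A$ arrives at that exit cell corresponds to the regular step of $M$ immediately after the deletion scan. This requires knowing that the block $\A$ traverses is exactly $W_\A[i.\prev+1,\, i.\nnext]$ after all deletions --- i.e.\ that the neighbors $i.\prev$ and $i.\nnext$ (after the scan) genuinely hold rank-$<d$ letters, so that the block is maximal and $\A$'s first rank-$<d$ encounter is at one of these two cells. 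That maximality is guaranteed by the deletion scan consuming all adjacent mappings, combined with invariant (c) telling us the surviving neighbors are letters. Second, the case analysis in the algorithm on whether $\dvec p$ points left or right before consulting $D$: I would check that when $\dvec p = \rvec p$ after the cell-$i$ write (head would move right into $W_\A[i.\nnext]$ region), composing $g$ with the right neighbor's mapping $h$ and then calling $D(g,h,\rvec p)$ gives the true exit, whereas if $\dvec p = \lvec p$ the symmetric call $D(f,g,\lvec p)$ with the left neighbor applies --- and that the ``never leaves'' ($\uto$) sub-cases cause $M$ to reject exactly when $\A$ loops. Once these are pinned down, (i) and (ii) are just the restriction of invariants (b) and (a) to the visited cell, and (iii) follows as noted. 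I would close by remarking that the formal correctness of $\diamond$ and $D$ referenced as ``to be justified later'' is exactly what the membership subsection supplies, so here we may invoke it as a black box.
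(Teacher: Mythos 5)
Your proposal is correct and follows essentially the same route as the paper's proof: an induction that synchronizes $\A$ and $M$ at regular steps, maintained by the invariant that each mapping cell describes the maximal adjacent rank-$d$ segment, with the case analysis ($\A$-move, mapping cell, deletion scan with $CF$, $\diamond$, and $D$) mirroring the paper's Cases 1 and 2 and likewise deferring the well-definedness of $\diamond$ and $D$ to the membership subsection. Your version merely spells out the invariants (b)--(d) more explicitly than the paper does.
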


\begin{proof}
Let $\A$ perform $N$ moves
\begin{equation}\label{eq::run}
\delta_1, \ldots, \delta_N,\quad 
\delta_i \in Q\times\Gamma\times\{\lto, \rto\},\quad 
\delta_1 = \delta(q_0, W^0_\A[1]) 
\end{equation}
on a fixed input, and either accepts the input or enters a loop.  

We call a move $\delta_i = (q,a,m)$ a \emph{$d$-move} if $a$ has rank $d$
but is not an endmarker; otherwise we call it a \emph{regular move}.  
Thus a run is a sequence~\eqref{eq::run} partitioned into alternating 
segments of regular moves and $d$-moves.

For $M$ we define runs analogously: regular moves are the same, 
while a $d$-move is either a step into a cell containing a mapping, 
or a step of the deletion scan initiated on a cell with a symbol of rank $d-1$.  

We claim:
\begin{enumerate}
  \item[(i)] if we delete all $d$-moves from the runs of $\A$ and $M$, the resulting sequences of regular moves are identical;
  \item[(ii)] after each maximal block of $d$-moves, both automata end in the same cell and in the same state.
\end{enumerate}
From these two properties it follows that $M$ accepts exactly
the same words as $\A$, because accepting configurations are reached 
by regular moves only. The correspondence $t \mapsto t'$ 
between regular steps of $\A$ and $M$ is then precisely 
the index matching described before the lemma.

Note that property (i) follows immediately from (ii), 
since between two blocks of $d$-moves both automata perform the same 
sequence of regular moves. Indeed, once the heads are in the same cell 
with the same symbol of rank $< d$ and in the same state, 
the subsequent regular moves of $M$ coincide with those of $\A$ 
by construction.

It remains to prove (ii). 
Assume that before some $d$-block both $\A$ and $M$ are in the same state 
on the same cell. We distinguish two cases for the first move of 
the $d$-block of $\A$:

\smallskip
\noindent
\textbf{Case 1:} the head visits a cell $i$ containing a symbol $a$ of rank $d-1$.  
Then $\A$ rewrites $a$ to a symbol $X$ of rank $d$.  
In the corresponding move $M$ writes into $i$ the mapping $f_X$ describing 
the one-cell segment at $i$.  
If the $d$-block of $\A$ ends immediately 
(or on the next step $\A$ visits another rank-$(d-1)$ cell), 
then by the definition of $f_X$ both automata end in the same cell and state.  
Otherwise, $\A$ proceeds into a neighboring rank-$d$ cell and eventually 
leaves the contiguous rank-$d$ segment.  
By construction of $M$, after the deletion scan and subsequent 
use of the departure function $D$, $M$ arrives at the same cell and in 
the same state as $\A$.  
Thus both automata synchronize at the end of the $d$-block.

\smallskip
\noindent
\textbf{Case 2:} the head of $\A$ enters a cell of rank $d$.  
Hence $\A$ moves inside a segment of contiguous rank-$d$ symbols, while $M$ is positioned at 
a cell containing a mapping $f$ describing this segment (the invariant maintained by the deletion scan).  
Since $f$ faithfully describes the segment, after $M$ executes the step $f(\dvec{q})$, 
it reaches exactly the same exit cell and state as $\A$.  
If $\A$ then continues into a rank-$(d-1)$ cell, we return to Case~1; otherwise the $d$-block ends with synchronization.

\smallskip
Finally, if in either case $\A$ enters a loop, the corresponding mapping for $M$ returns $\uto$, and $M$ rejects the input.  
Thus (ii) holds, which completes the proof.
\qed
\end{proof}

Now we prove that the simulation algorithm for $d$-DLAs works in linear time. 
We present the proof for the general case of a $d(n)$-DLA. 
The simulation algorithm for $d(n)$-DLAs is identical to that for a fixed $d$, 
since its behavior depends only on whether the number of visits to a cell 
is equal to $d(n)-1$ or smaller. 
The counters for cell visits required in the case of $d(n)$-DLAs 
can be implemented in the RAM model with $O(1)$ overhead per operation, 
and thus do not affect the asymptotic running time.

We denote by $\langle F \rangle(\A)$ the time complexity of the operation $F$ 
on a $d$-DLA~$\A$. 
Since we will prove later that the complexity of each auxiliary step 
depending on $\A$ is $O(|Q|)$, we replace 
$\langle CF \rangle(\A)$, $\langle D \rangle(\A)$, and $\langle \diamond \rangle(\A)$ by
\begin{equation}\label{eq:UB}    
\langle U\!B \rangle(\A) \;=\; \langle CF \rangle(\A) \;+\; \langle D \rangle(\A)
\;+\;\langle \diamond \rangle(\A).
\end{equation}

\begin{lemma}\label{th:AmortizedAnalysis}
The automaton $M$ performs $O(d(n)\cdot\langle U\!B\rangle(\A)\cdot n)$ steps 
on processing an input of length~$n$. 
\end{lemma}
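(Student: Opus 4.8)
The plan is to carry out an amortized (potential-function) analysis of the run of $M$. First I would classify the steps of $M$ into three groups: $\A$-moves (lines~\ref{alg::Amove:begin}--\ref{alg::Amove:end}), the $\F$-moves in which $M$ reads a mapping and jumps to a neighbor (lines~\ref{alg::Fmove:begin}--\ref{alg::Fmove:end}), and the deletion-scan steps triggered when $M$ first writes a mapping into a cell of former rank $d-1$. Each individual step costs $O(\langle U\!B\rangle(\A))$ time: an $\A$-move is one table lookup ($O(1)$, absorbed into $\langle U\!B\rangle(\A)$), an $\F$-move evaluates $f(\dvec q)$ once, and a deletion scan performs at most two $\diamond$-compositions, at most two $D$-evaluations and one $CF$-evaluation. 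So it suffices to bound the \emph{number} of steps of each type by $O(d(n)\cdot n)$ and multiply through by $\langle U\!B\rangle(\A)$.

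Bounding $\A$-moves is the crux, and here the key observation is the rank-monotonicity of $\A$: every time the head of $\A$ (equivalently, an $\A$-move of $M$) visits a cell of rank $<d$, that cell's rank strictly increases. Hence a fixed cell can be the target of at most $d(n)$ many $\A$-moves — once it reaches rank $d$ it is converted into a mapping and never participates in an $\A$-move again. Since $\A$'s tape has $n+2$ cells (including the two endmarkers, which have rank $d$ from the start and therefore contribute a bounded amount), the total number of $\A$-moves over the whole run is $O(d(n)\cdot n)$. This is also where the assumption ``the computation does not enter an infinite loop'' enters: without it $\A$ could cycle forever among rank-$d$ cells, but Lemma~\ref{Th::Correctness} guarantees that such a loop is detected (a mapping returns $\uto$) and $M$ halts; so $M$ performs only finitely many steps, and the above counting applies.

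Next I would bound the deletion-scan steps. A deletion scan is initiated exactly once per cell — at the moment that cell is first overwritten by a mapping — so there are at most $n$ deletion scans in total, each of cost $O(\langle U\!B\rangle(\A))$, contributing $O(n\cdot\langle U\!B\rangle(\A))$, which is dominated by the target bound. For the $\F$-moves I would set up a potential argument: let $\Phi$ be the number of cells of $M$'s current list that contain a mapping. Writing a new mapping and running its deletion scan changes $\Phi$ by $+1$ minus (number of neighbours absorbed), and crucially each $\F$-move that $M$ makes \emph{enters} a mapping cell coming from a non-mapping (rank $<d$) neighbour and immediately \emph{leaves} it toward a non-mapping neighbour, by the invariant (established in the proof of Lemma~\ref{Th::Correctness}) that the two neighbours of a mapping cell always hold letters. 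Thus no mapping cell is ever entered twice in a row from mappings, and the mapping cells are never adjacent; so the number of $\F$-moves between two consecutive $\A$-moves is at most one, giving at most $O(d(n)\cdot n)$ $\F$-moves in total. (Alternatively, one charges each $\F$-move to the $\A$-move that immediately follows it.)

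Putting the three bounds together, $M$ performs $O(d(n)\cdot n) + O(n) + O(d(n)\cdot n) = O(d(n)\cdot n)$ steps, each of cost $O(\langle U\!B\rangle(\A))$, for a total of $O(d(n)\cdot\langle U\!B\rangle(\A)\cdot n)$, as claimed. The main obstacle I anticipate is making the $\F$-move count airtight: one must be careful that the deletion scan genuinely restores the ``letter-neighbour'' invariant on both sides (including the boundary cases where a neighbour is an endmarker, handled by the guards $i.\prev>0$ and $i.\nnext<n+1$), since that invariant is exactly what prevents chains of mapping cells from forcing $M$ to make many consecutive $\F$-moves. Everything else is routine once that structural invariant is in hand.
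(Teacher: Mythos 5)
Your proof is correct in substance and follows essentially the same amortized charging scheme as the paper: at most $d(n)$ visits per letter cell, each visit to a mapping cell charged to an adjacent letter cell (equivalently, to the neighbouring non-mapping move via the ``no two adjacent mappings'' invariant maintained by the deletion scan), and $O(n)$ deletion scans of cost $O(\langle U\!B\rangle(\A))$ each; the paper phrases this as an explicit accounting-method budget of $2d(n)$ per cell, but the bookkeeping is the same. The one loose spot is your treatment of the endmarkers: they are \emph{not} visited only a bounded number of times --- they can be visited $\Theta(d(n)\cdot n)$ times --- and since their rank never increases they escape your rank-monotonicity count of $\A$-moves; moreover an $\F$-move can be followed by an endmarker visit (and then another $\F$-move) rather than by a rank-increasing $\A$-move, which strains your ``at most one $\F$-move between consecutive $\A$-moves'' charging. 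The paper closes this with a separate case: after the first step, every endmarker visit arrives from an inner cell, so the number of endmarker visits is dominated by the number of inner-cell visits, which is already bounded. Adding that observation makes your count airtight.
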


\begin{proof}
We use amortized analysis~\cite{CLRS2022}, namely the accounting method. 
Each cell $i \in \{1,\ldots, n\}$ on $M$'s tape has its own budget $B[i]$ 
(credit, in the terminology of~\cite{CLRS2022}). 
We denote by $B^t[i]$ the value of the budget of cell $i$ after step~$t$. 

The budgets are updated according to the following rules: 
\begin{itemize}
  \item $B^{0}[i] = 2d(n)$ for all $i$;
  \item $B^{t}[i] = B^{t-1}[i]-1$ if at step $t$ the head visits cell~$i$ 
        and this cell still contains a letter (i.e., it has been visited fewer than $d(n)$ times);
  \item $B^{t}[j] = B^{t-1}[j]-1$ if at step $t$ the head enters cell~$i$ 
        from cell~$j$, where $W_M[i]$ is a mapping and $W_M[j]$ is a letter;
  \item $B^{t}[i] = B^{t-1}[i]$ otherwise.
\end{itemize}

Budgets are not changed during deletion scans.

\smallskip
Fix a step $t$. 
Suppose that the $i$-th cell currently contains a mapping $f = W^t_M[i]$ 
describing a segment $W_\A[l,r]$. 
Its neighbors $W_M[l-1] = W_\A[l-1]$ and $W_M[r+1]=W_\A[r+1]$ have rank $< d(n)$, 
so each has been visited fewer than $d(n)$ times. 
When the head moves from a neighbor into the segment, that neighbor pays \$1. 
Until the neighbor itself turns into a mapping, it pays at most $d(n)$ times 
for such visits. Once it becomes a mapping, further payments are taken over 
by the new neighbor. 
Thus each cell pays \$1 for each of its own visits and at most \$1 for each 
visit into an adjacent mapping before it itself turns into a mapping. 
Since this can happen only after at most $d(n)$ visits of the cell, 
each cell pays at most $2d(n)$ dollars in total. 
Therefore, the described budget strategy guarantees $B^t[i] \ge 0$ 
for all $t$ and~$i$.

\smallskip
Deletion scans were not counted above. 
Clearly there are at most $O(n)$ scans, since each cell can initiate at most one. 
During one scan, at most two directed compositions are computed, 
each in $O(\langle U\!B\rangle(\A))$. 
Hence all deletion scans together cost $O(n\cdot \langle U\!B\rangle(\A))$ time. 

\smallskip
Summing up, $M$ performs the following kinds of operations:
\begin{enumerate}
  \item moves that end on a letter, but not an endmarker;
  \item moves that end on a mapping;
  \item deletion scans;
  \item moves that end on an endmarker.
\end{enumerate}

By amortized analysis, Cases 1 and 2 together take 
$O(d(n)\cdot \langle U\!B\rangle(\A) \cdot n)$: 
the total number of such moves is $O(d(n)\cdot n)$ 
(since $\sum_{i=1}^n B^0[i] = 2n\cdot d(n)$), 
and each move costs $O(\langle U\!B\rangle(\A))$. 
Case~3 costs $O(n\cdot \langle U\!B\rangle(\A))$, as discussed. 
For Case~4, note that after the head leaves the left endmarker $\lendmarker$ 
on the very first move, each endmarker can only be visited 
when arriving from an inner cell. 
Hence the total number of endmarker visits does not exceed 
the number of visits to all other cells, which is $O(n\cdot d(n))$. 
Since each simulation step of Algorithm~\ref{Fig::SimulationAlgorithm} 
takes at most $O(\langle U\!B\rangle(\A))$ time, 
endmarker visits also cost $O(d(n)\cdot \langle U\!B\rangle(\A) \cdot n)$. 

Thus the total running time of $M$ on an input of length~$n$ 
is $O(d(n)\cdot \langle U\!B\rangle(\A)\cdot n)$.
\qed
\end{proof}


\subsection{Membership Problem}\label{subsect:Membership}

To analyze the membership problem for $d$-DLAs we need to formalize 
the auxiliary operations on mappings. 
Recall that mappings represent contiguous segments of rank-$d$ cells 
and that the simulation algorithm relies on three basic subroutines:
\begin{itemize}
  \item the \emph{cell description} function $CF$, 
        which produces the mapping for a single rank-$d$ cell;
  \item the \emph{directed composition} $\diamond$, 
        which merges mappings of adjacent segments into one;
  \item the \emph{departure function} $D$, 
        which determines the exit state and direction 
        when the head is located at the boundary between two adjacent segments,
        i.e., when it enters one segment from the other.
\end{itemize}
We prove in this subsection that all these operations are well-defined and computable 
in $O(|Q_\A|)$ time, so $\langle U\!B\rangle(\A) = O(|Q_\A|)$.

\smallskip
Our constructions rely on graph representations of mappings. 
A mapping $f \in \F$ describing a segment $L$ is encoded by a four-partite 
graph $G_f$ with parts 
$\rvec{L}_{\Gin}, \lvec{L}_{\Gin}, \rvec{L}_{\Gout}, \lvec{L}_{\Gout}$, 
each a copy of $Q_\A$. 
These four parts form a partition of the set $\dvec{Q}$ according to their labeling.
We also adjoin a distinguished sink vertex $(\uto)$ of out-degree~$0$. 
For every $\dvec{q}\in\dvec{L}_{\Gin}$ we add:
\begin{itemize}
    \item an edge $\dvec{q}\to\dvec{p}$ to some $\dvec{p}\in\dvec{L}_{\Gout}$ if $f(\dvec{q})=\dvec{p}$,
    \item or an edge $\dvec{q}\to (\uto)$ if $f(\dvec{q})=\uto$.
\end{itemize}
Thus every vertex has out-degree at most~$1$: 
inputs have either one outgoing edge to an output or to $(\uto)$, 
while outputs and $(\uto)$ have out-degree~$0$. 
We say that $G_f$ \emph{represents} the mapping $f$ (or the segment $L$).

If $g$ describes an adjacent segment $R$, to compute the composition 
$f\diamond g$ and the departure function $D$ we use the 
\emph{intermediate graph}~${\cal I}(f,g)$ obtained 
by gluing the graphs $G_f$ and $G_g$ (with parts labeled by $R$).
Formally, ${\cal I}(f,g)$ is defined as follows: 
part $\rvec{L}_{\Gout}$ is glued with $\rvec{R}_{\Gin}$, 
part $\lvec{L}_{\Gin}$ with $\lvec{R}_{\Gout}$, 
and the two sinks~$(\uto)$ are glued together. 
By gluing two vertices we mean that one of them is deleted and 
all its edges are reattached to the other.
When we glue parts, we glue the vertices carrying the same state label 
(but with opposite directions). 
An illustration of ${\cal I}(f,g)$ is given in Fig.~\ref{Fig:GraphComp}.

\begin{figure}
\begin{center}
\begin{tikzpicture}[
  line cap=round,line join=round,>=Stealth,
  x=1cm,y=1cm,scale=1,
  vertex/.style={circle,fill=black,inner sep=1.6pt},
  Lset/.style={line width=0.9pt,draw=black!60,fill=black!10},
  Rset/.style={line width=0.9pt,draw=black!60,fill=black!1},
  fedge/.style={->,line width=0.9pt,draw=black},                
  gedge/.style={->,line width=0.75pt,draw=black!50},      
  glue/.style={line width=0.8pt,draw=black!60,dash pattern=on 3pt off 2pt}
]

\clip(1.5,1.5) rectangle (11.5,9.6);

\draw[Lset] (2.7,8) ellipse (0.6cm and 1.5cm);   \node at (1.8,8) {$\rvec{L}_{\Gin}$};
\draw[Lset] (6.7,8) ellipse (0.6cm and 1.5cm);   \node at (8.2,8) {$\lvec{L}_{\Gin} = \rvec{R}_{\Gout}$};
\draw[Lset] (3.7,4.5) ellipse (0.6cm and 1.5cm); \node at (2.7,4.5) {$\lvec{L}_{\Gout}$};
\draw[Lset] (5.2,4.5) ellipse (0.6cm and 1.5cm); \node at (6.7,4.5) {$\rvec{L}_{\Gout} = \lvec{R}_{\Gin}$};

\draw[Rset] (10.0,8) ellipse (0.6cm and 1.5cm);  \node at (11.0,8) {$\lvec{R}_{\Gin}$};
\draw[Rset] (8.5,4.5) ellipse (0.6cm and 1.5cm); \node at (9.5,4.5) {$\rvec{R}_{\Gout}$};



\node[vertex] (Lrin1) at (2.7,9) {};
\node[vertex] (Lrin2) at (2.7,8.25) {};
\node[vertex] (Lrin3) at (2.7,7.5)  {};

\node[vertex] (Llin1) at (6.7,9) {};
\node[vertex] (Llin2) at (6.7,8.25) {};
\node[vertex] (Llin3) at (6.7,7.5)  {};

\node[vertex] (Llout1) at (3.7,5.5) {};
\node[vertex] (Llout2) at (3.7,4.75) {};
\node[vertex] (Llout3) at (3.7,4.0)  {};

\node[vertex] (Lrout1) at (5.2,5.5) {};
\node[vertex] (Lrout2) at (5.2,4.75) {};
\node[vertex] (Lrout3) at (5.2,4.0)  {};

\node[vertex] (Rlin1) at (10.0,9) {};
\node[vertex] (Rlin2) at (10.0,8.25) {};
\node[vertex] (Rlin3) at (10.0,7.5)  {};

\node[vertex] (Rrout1) at (8.5,5.5) {};
\node[vertex] (Rrout2) at (8.5,4.75) {};
\node[vertex] (Rrout3) at (8.5,4.0)  {};

\draw[fedge] (Lrin1) -- (Lrout1);
\draw[fedge] (Lrin2) -- (Lrout2);
\draw[fedge] (Llin1) -- (Llout3);
\draw[fedge] (Lrin3) -- (Llout1);

\draw[gedge] (Rlin3) -- (Rrout2);
\draw[gedge] (Rlin1) -- (Llin1);
\draw[gedge] (Llin2) -- (Lrout1);
\draw[gedge] (Lrout1) .. controls (6.5,6.2) and (7.2,7) .. (Llin2);
\draw[gedge] (Lrout2) .. controls (6.5,3) and (7.2,3) .. (Rrout3);

\node[vertex,label=above:$(\uto)$] (uto) at (8.5, 6.5) {};
\draw[gedge] (Rlin2) .. controls (9,6.8) .. (uto);
\draw[fedge] (Llin3) -- (uto);

\node[draw,inner sep=2pt,anchor=west] at (3,2) {
  \begin{tikzpicture}[baseline=-0.5ex]
    \draw[fedge] (0,0)--(0.9,0); \node at (1.4,0) {\scriptsize edges of $f$};
    \draw[gedge] (3.2,0)--(4.1,0); \node at (4.6,0) {\scriptsize edges of $g$};
  \end{tikzpicture}
};

\end{tikzpicture}
\vspace*{-8pt}
\caption{Graph-based computation of $f\diamond g$ and $D$.}
\label{Fig:GraphComp}
\end{center}
\end{figure}

\begin{proposition}\label{Prop:GraphProps}
        The directed composition $h=f\diamond g$ is determined via the intermediate graph
        as follows: $h(\dvec{q})=\dvec{p}$ iff there is a path 
        $\dvec{q}\leadsto\dvec{p}$ to an output vertex, and $h(\dvec{q})=\uto$ iff the
        unique path from $\dvec{q}$ reaches $(\uto)$ or falls into a directed cycle.
        Associativity follows from associativity of concatenating segments (hence of
        gluing graphs).
\end{proposition}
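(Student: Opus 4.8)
The plan is to verify the stated characterisation of $h = f \diamond g$ by unfolding the definition of what it means for $h$ to describe the concatenated segment $W_\A[L,R]$, and then tracing a run of $\A$ inside that segment through the intermediate graph ${\cal I}(f,g)$. First I would set up the correspondence between (i) configurations of $\A$ at the boundary between the two subsegments $W_\A[L,r]$ and $W_\A[r,R]$ and (ii) vertices of ${\cal I}(f,g)$ lying in the glued parts $\rvec{L}_{\Gout} = \lvec{R}_{\Gin}$ and $\lvec{L}_{\Gin} = \rvec{R}_{\Gout}$. Concretely: when $\A$ is about to cross from $W_\A[L,r]$ into $W_\A[r,R]$ moving right, it is in some directed state $\rvec{s}$ which, by the defining property of $f$ and $g$, is both an output of $G_f$ (the head has just left $W_\A[L,r]$ to the right) and an input of $G_g$ (it is about to enter $W_\A[r,R]$ from the left); under the gluing these are the same vertex. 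The symmetric statement holds for leftward crossings. So a maximal run of $\A$ inside $W_\A[L,R]$ that starts at an entry state $\dvec{q}$ corresponds exactly to following the unique out-edge from $\dvec{q}$ in ${\cal I}(f,g)$, step by step.

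Next I would argue the two directions of the iff. For the forward direction, suppose the head enters $W_\A[L,R]$ in directed state $\dvec{q}$ and eventually leaves it in directed state $\dvec{p}$; then $\dvec{p}$ must be a global output vertex (a leaf of $G_f$ on the $\lvec{L}_{\Gout}$ side or a leaf of $G_g$ on the $\rvec{R}_{\Gout}$ side, the two parts not consumed by the gluing), and the alternation of $\A$'s excursions through the left subsegment and the right subsegment is exactly the path $\dvec{q} \leadsto \dvec{p}$ traced in the graph; this is a straightforward induction on the number of boundary crossings. Conversely, since every non-sink vertex of ${\cal I}(f,g)$ has out-degree at most~$1$, starting from $\dvec{q}$ there is a unique maximal path, and it must either terminate at an output vertex, terminate at $(\uto)$ (meaning one of the two subsegment computations loops forever without crossing back, which by definition of $f$ or $g$ means $\A$ itself loops), or enter a directed cycle. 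A cycle corresponds to $\A$ oscillating between the two subsegments forever, hence never leaving $W_\A[L,R]$, so in that case $h(\dvec{q}) = \uto$ as well; this is precisely the content of the second clause. The departure function $D(f,g,\dvec{q})$ is then read off the same path when the entry point is specified to lie on the internal boundary rather than on the outer ends of the segment, so its well-definedness is immediate from the same argument.

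For associativity, I would observe that concatenation of tape segments is associative on the nose, so if $f_1, f_2, f_3$ describe $W_\A[L_0,L_1], W_\A[L_1,L_2], W_\A[L_2,L_3]$, then both $(f_1 \diamond f_2) \diamond f_3$ and $f_1 \diamond (f_2 \diamond f_3)$ describe the same segment $W_\A[L_0,L_3]$; since a mapping describing a given segment is uniquely determined (the behaviour of $\A$ on that segment is deterministic), the two are equal. Equivalently one can note that gluing the three graphs $G_{f_1}, G_{f_2}, G_{f_3}$ along their shared boundary parts is an associative operation, and reachability in the resulting graph does not depend on the order of gluing.

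The main obstacle is the careful bookkeeping of directions and part-labels in the gluing: one must check that the entry/exit conventions fixed in the ``Segment traversal'' paragraph make the two occurrences of each boundary state (as an output of one side and an input of the other) genuinely the same vertex of ${\cal I}(f,g)$, with opposite direction labels as stated, and that no outer-boundary vertex is accidentally identified. Once that identification is pinned down, the path-tracing argument and the out-degree-$\leq 1$ observation make the rest routine.
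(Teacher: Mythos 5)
Your plan is correct and follows essentially the same route the paper takes (which states this proposition with only a sketch: the run of $\A$ through the concatenated segment is traced edge-by-edge through the glued boundary vertices, the out-degree-at-most-one property gives a unique path ending at an output, the sink, or a cycle, and associativity reduces to uniqueness of the mapping describing a fixed segment). Your explicit induction on boundary crossings and the cycle-means-oscillation argument are exactly the details the paper leaves implicit, so there is nothing to add beyond the bookkeeping caveat you already flag.
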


\begin{lemma}\label{lem:UB}
For every $d$-DLA $\A$, the operations $\diamond$, $D$, and $CF$ 
are well-defined and computable in $O(|Q_\A|)$ time, 
assuming that the description of $\A$ is already stored in RAM.
\end{lemma}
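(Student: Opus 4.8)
The plan is to handle $CF$ outright and to reduce both $\diamond$ and $D$ to a single linear-time pass over the intermediate graph ${\cal I}(f,g)$, treating each mapping as a table of size $O(|Q_\A|)$ in RAM (which is how the simulation algorithm stores the mappings it writes). For $CF(X)$ with $X\in\Gamma_d$ not an endmarker, a one-cell rank-$d$ segment can never loop: whatever state $q$ the head enters with, $\A$ performs $\delta_\A(q,X)=(q',X,m)$ and immediately leaves the cell, so $CF(X)(\rvec{q})=CF(X)(\lvec{q})=(q',m)$. Thus $CF(X)$ is filled in by iterating $q$ over $Q_\A$ with one lookup of $\delta_\A$ each, in $O(|Q_\A|)$ time; well-definedness is immediate, the value depending only on the cell.

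For $\diamond$ and $D$ I would first build ${\cal I}(f,g)$ explicitly: its vertices are the four $Q_\A$-copies of $G_f$, the two $Q_\A$-copies of $G_g$ surviving the gluing, namely $\lvec{R}_{\Gin}$ and $\rvec{R}_{\Gout}$, and the merged sink $(\uto)$ --- in total $O(|Q_\A|)$ vertices and $O(|Q_\A|)$ edges, constructible in $O(|Q_\A|)$ time from the tables of $f$ and $g$. The property that makes everything fast is that ${\cal I}(f,g)$ is a \emph{functional graph}: every input vertex of $f$ or of $g$ has out-degree exactly one (the mappings are total functions into $\dvec{Q}_{\uto}$), the merged boundary vertices $\rvec{L}_{\Gout}=\rvec{R}_{\Gin}$ and $\lvec{L}_{\Gin}=\lvec{R}_{\Gout}$ inherit the out-edge coming from $g$ resp.\ from $f$, and the only out-degree-zero vertices are $\lvec{L}_{\Gout}$, $\rvec{R}_{\Gout}$, and $(\uto)$. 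Hence from any vertex there is a unique forward path, terminating either at $\lvec{L}_{\Gout}$ or $\rvec{R}_{\Gout}$ (a genuine exit of the whole concatenated segment), at $(\uto)$, or inside a directed cycle.

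Well-definedness of $\diamond$ and $D$ is then the operational picture already recorded in Proposition~\ref{Prop:GraphProps}: entering the concatenation $W_\A[L,R]$ of two all-rank-$d$ segments, $\A$ can interact with each half only by entering and leaving it, so its trajectory is exactly the forward path in ${\cal I}(f,g)$; in particular the induced state-to-state behavior depends only on $f$ and $g$, and associativity of $\diamond$ is associativity of segment concatenation (equivalently, of gluing). The inputs of the concatenated segment are the vertices of $\rvec{L}_{\Gin}$ and $\lvec{R}_{\Gin}$, so $f\diamond g$ is read off from their path-terminals (a cycle and the sink both giving value $\uto$), while $D(f,g,\cdot)$ is the same lookup started from the boundary vertices $\rvec{L}_{\Gout}$ and $\lvec{L}_{\Gin}$, and so comes for free.

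It thus remains to compute, for all $O(|Q_\A|)$ vertices of ${\cal I}(f,g)$ simultaneously, the terminal of each forward path --- one of $\lvec{L}_{\Gout}$, $\rvec{R}_{\Gout}$, $(\uto)$, or the verdict ``cycle'' (treated as $\uto$) --- in total time $O(|Q_\A|)$. This is the classical terminal problem on a functional graph: run an iterative depth-first walk along the unique out-edges with three-colouring, marking a vertex grey on entry; on reaching a black (already resolved) or out-degree-zero vertex, propagate its terminal back down the stack and recolour those vertices black; on reaching a grey vertex, every vertex currently on the stack feeds into the just-discovered cycle, so assign them all the value $\uto$ and pop. Each edge is handled $O(1)$ times, so the pass is $O(|Q_\A|)$, and extracting the entries for $\diamond$ and $D$ is another $O(|Q_\A|)$. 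I expect the main obstacle to be the boundary bookkeeping rather than any algorithmic depth: one must check that the in/out and left/right labels of the glued parts are identified correctly, so that the functional-graph argument genuinely applies and the recovered function is the mapping it is meant to be, and that the cycle handling never misclassifies a vertex whose forward path has a cycle-free prefix before meeting an already-resolved component --- both routine once the gluing (and Proposition~\ref{Prop:GraphProps}) are in place.
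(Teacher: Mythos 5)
Your proposal is correct and follows essentially the same route as the paper: both reduce $\diamond$ and $D$ to computing path terminals in the functional intermediate graph ${\cal I}(f,g)$ in a single $O(|Q_\A|)$ pass, after handling $CF$ by a direct table fill from $\delta_\A$. The only difference is cosmetic --- the paper's \textsc{FindPath} marks each visited vertex with its source input and memoizes already-computed answers to detect cycles and merged paths, whereas you use the equivalent three-colour DFS with cycle collapse; both correctly assign $\uto$ to every vertex whose unique forward path enters a cycle or the sink, and both charge each edge $O(1)$ times.
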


\begin{proof}
The function $CF$ is trivially computable in $O(|Q_\A|)$ time: 
for a rank-$d$ symbol $a_d$, the mapping $f=CF(a_d)$ satisfies 
$f(\dvec{q})=\dvec{p}$ iff $\delta_\A(q,a_d)=(\dvec{p},a_d)$.

\smallskip
\SetKwFunction{FindPath}{FindPath}
We next analyze the algorithms for $\diamond$ and $D$. 
Let $V$ and $E$ be the vertex and edge sets of the intermediate graph~${\cal I}(f,g)$
(Fig.~\ref{Fig:GraphComp}). Since each vertex has out-degree at most~1, 
we have $|E|=O(|V|)$ and $|V|=O(|Q_\A|)$. 
We maintain two global arrays: $h$ for the resulting mapping 
and $m$ for marks on vertices. Initially all values of $m$ are set to a placeholder~$\dto$.
These arrays serve as memoization.

The core procedure is the function \FindPath (Algorithm~\ref{Fig:Alg:FindPath}), 
which returns the endpoint of the unique path starting from $\dvec{q}$, 
or~$\uto$ if the path falls into a loop. 
We explain the steps of the algorithm while proving correctness.


\begin{figure}[ht]
\begin{algorithm}[H]
  \SetKwInOut{Input}{Input}\SetKwInOut{Output}{Output}  
  \SetKwProg{Fn}{Function}{ :}{end}

  \Fn{\FindPath{$\dvec{q}$}}{ \label{alg:FindPath}
    $u := \dvec{q}.\nnext$\;
    \While{$u \not\in \lvec{L}_{\Gout}\cup\rvec{R}_{\Gout}\cup\{(\uto)\}$}{
      \uIf{$m[u] = \dto$}{
        $m[u] := \dvec{q}$;\quad $u := u.\nnext$\;
      }\uElseIf{$m[u] = \dvec{q}$}{
        \KwRet $\uto$\tcc*{Detected a cycle reachable from $\dvec{q}$}
      }\Else{
        \KwRet $h[m[u]]$\tcc*{Path merges with a processed input}
      }
    }
    \KwRet $u$\tcc*{Reached an output or the sink $(\uto)$}
  }

  \caption{\textsc{FindPath} on the intermediate graph}\label{Fig:Alg:FindPath}
\end{algorithm}
\end{figure}

\smallskip
\emph{Correctness and complexity of \FindPath.}
Let $u_0=\dvec{q}$ and define $u_{k+1}=u_k.\nnext$. 
The procedure terminates when one of the following holds:
\begin{itemize}
  \item $u_k$ is an output: then \FindPath returns $u_k$;
  \item $u_k=(\uto)$: then \FindPath returns $\uto$;
  \item $u_k$ is marked: let $\dvec{s}=m[u_k]$.
\end{itemize}
If $\dvec{s}=\dvec{q}$, a loop is detected and \FindPath returns $\uto$.
Otherwise \FindPath returns $h(\dvec{s})$, 
since $h(\dvec{s})$ has already been computed and stored in $h[m[u]]$. 
As the paths from $\dvec{q}$ and $\dvec{s}$ merge at the vertex $u_k$,  
their endpoints coincide.

The invariant “if $m[u]=\dvec{s}$ then $h[m[u]]=h(\dvec{s})$” 
is established by Algorithm~\ref{Alg:d-comp-refactored}, 
which computes $h=f\diamond g$ using \FindPath. 

\begin{figure}[ht]
\begin{algorithm}[H]
  \SetKwInOut{Input}{Input}\SetKwInOut{Output}{Output}

  $m[u] := \dto$ for all $u\in V$\tcc*{clear marks}
  $h[\dvec{q}] := \dto$ for all $\dvec{q}\in \rvec{L}_{\Gin}\cup\lvec{R}_{\Gin}$\;

  \For{$\dvec{q}\in \rvec{L}_{\Gin}\cup\lvec{R}_{\Gin}$}{
    $h[\dvec{q}] := \FindPath(\dvec{q})$\;
  }

  \caption{Directed composition $h=f\diamond g$}
  \label{Alg:d-comp-refactored}
\end{algorithm}
\end{figure}

Throughout its execution each edge of~${\cal I}(f,g)$ 
is traversed at most once before the corresponding vertex is marked, 
so the total running time is $O(|E|+|V|)=O(|Q_\A|)$. 
Observe that $D(f,g,\dvec{q})$ is defined exactly as the endpoint of the 
path from~$\dvec{q}$ in the intermediate graph ${\cal I}(f,g)$. 
Hence a single call $D(f,g,\dvec{q})$ is realized by 
$\FindPath(\dvec{q})$ and runs in $O(|Q_\A|)$ time.
Therefore $\diamond$ and $D$ are computable in $O(|Q_\A|)$ time, 
and together with the bound for $CF$ this completes the proof. 
\qed\end{proof}

\subsection{Main Result}

We now summarize the simulation in the following theorem.

\begin{theorem}\label{thm:main}
For every $d(n)$-DLA $\A$, the membership problem for $\A$ is solvable 
in time $O(k \cdot n \cdot d(n) + m)$ on a RAM model, 
where $n$ is the length of the input word~$w$, 
$k=|Q_\A|$ is the number of states of $\A$, 
and $m$ is the length of the description of $\A$, 
assuming that the function $d(n)$ is computable in time $O(n+m)$.
\end{theorem}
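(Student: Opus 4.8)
The plan is to assemble the pieces already proved: Lemma~\ref{Th::Correctness} guarantees that the DLLA $M$ constructed from $\A$ accepts exactly the same inputs as $\A$, Lemma~\ref{th:AmortizedAnalysis} bounds the number of steps of $M$ by $O(d(n)\cdot\langle U\!B\rangle(\A)\cdot n)$, and Lemma~\ref{lem:UB} shows $\langle U\!B\rangle(\A)=O(|Q_\A|)=O(k)$. Substituting the last bound into the second gives a step count of $O(k\cdot d(n)\cdot n)$ for the simulation itself. What remains is to account honestly for everything that is \emph{not} a ``step of $M$'' in the sense of Algorithm~\ref{Fig::SimulationAlgorithm}, namely: (a) reading the description of $\A$ into RAM, (b) computing the parameter $d(n)$, (c) setting up the doubly linked list holding the input word $w$ together with the endmarkers, and (d) initializing the per-cell budgets/visit-counters. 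I would present the proof as a short bookkeeping argument that each of these costs $O(n+m)$, and then observe that $O(n+m)+O(k\cdot d(n)\cdot n)=O(k\cdot n\cdot d(n)+m)$, which is the claimed bound (using $n\le k\cdot n\cdot d(n)$ when $d(n)\ge 1$; the degenerate case $d=0$ is trivial since then the automaton never rewrites and behaves as a two-way DFA, handled separately or absorbed into the $+m$ term).

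The first step is to fix the RAM model conventions: the input is the pair $(\langle\A\rangle, w)$ laid out in memory, $|\langle\A\rangle|=m$ and $|w|=n$, and word-size is $\Theta(\log(n+m))$ so that indices into the tape and state numbers fit in $O(1)$ words. Reading $\langle\A\rangle$ and building whatever internal representation of $\delta_\A$ the subroutines of Lemma~\ref{lem:UB} assume (``the description of $\A$ is already stored in RAM'') takes $O(m)$ time; by hypothesis $d(n)$ is computable in $O(n+m)$ time, so we spend that much to obtain the integer $d(n)$ (or detect $d(n)=\infty$ / handle $d(n)>n$, where every cell is rewritable throughout and the bound is vacuous but still correct). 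Next, construct the doubly linked list: one cell per symbol of $w$ plus two endmarker cells, each with $\prev$/$\nnext$ pointers, a visit counter initialized to $0$, and a budget $B[i]:=2d(n)$; this is $O(n)$ work. The $d(n)$-DLA refinement (visit counters per cell rather than ranks on symbols, as in Subsection on $d(n)$-limited automata) adds only $O(1)$ overhead per move, as already noted before Lemma~\ref{th:AmortizedAnalysis}, so it does not change the asymptotics.

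The second step is to run Algorithm~\ref{Fig::SimulationAlgorithm} to completion. By Lemma~\ref{Th::Correctness}(iii), $M$ accepts iff $\A$ accepts; when $\A$ loops, the relevant mapping evaluates to $\uto$ and $M$ rejects (as shown in the final paragraph of that lemma's proof), so the algorithm always halts with the correct answer. The number of simulation steps is $O(d(n)\cdot\langle U\!B\rangle(\A)\cdot n)$ by Lemma~\ref{th:AmortizedAnalysis}; each call to $CF$, $\diamond$, or $D$ costs $O(|Q_\A|)=O(k)$ by Lemma~\ref{lem:UB}, and every other per-step operation (pointer chasing, counter decrements, comparisons, evaluating a stored mapping $f(\dvec q)$ by table lookup) is $O(1)$ in the RAM model, or $O(k)$ if a mapping is stored as its graph $G_f$ and must be evaluated by one \FindPath\ call --- either way bounded by $O(k)$. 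Hence the run of $M$ costs $O(k\cdot d(n)\cdot n)$.

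Summing the preprocessing $O(n+m)$ with the simulation $O(k\cdot d(n)\cdot n)$ yields the stated $O(k\cdot n\cdot d(n)+m)$ total. I expect the main (and only mildly delicate) obstacle to be the model-level bookkeeping around mappings: one must be explicit that a mapping is an object of size $O(k)$ (e.g.\ the array form $f:\dvec Q\to\dvec Q_{\uto}$, or equivalently the graph $G_f$ with $|V|,|E|=O(k)$), that storing it in a linked-list cell is consistent with $O(1)$-word cells only up to this $O(k)$ blow-up which is already absorbed into $\langle U\!B\rangle(\A)=O(k)$, and that $CF$, $\diamond$, $D$ each produce/consume such objects in $O(k)$ time per Lemma~\ref{lem:UB}. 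Once this is pinned down, the rest is arithmetic. For the corollary worth stating in the theorem's wake: fixing $\A$ (so $k,m=O(1)$) and $d(n)=O(1)$ gives $O(n)$, i.e.\ every $d$-DCFL is linear-time recognizable in the RAM model, as promised in Subsection~\ref{subsect:OurContribution}.
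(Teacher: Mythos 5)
Your proposal is correct and follows essentially the same route as the paper's own proof: it combines Lemma~\ref{Th::Correctness} (correctness of $M$), Lemma~\ref{th:AmortizedAnalysis} (the $O(d(n)\cdot\langle U\!B\rangle(\A)\cdot n)$ step bound), and Lemma~\ref{lem:UB} ($\langle U\!B\rangle(\A)=O(k)$), plus the $O(m)$ preprocessing of $\A$'s description. The additional bookkeeping you spell out (RAM conventions, computing $d(n)$, building the linked list, the $O(k)$ size of stored mappings) is consistent with and only makes explicit what the paper's proof leaves implicit.
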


\begin{proof}
By Lemma~\ref{Th::Correctness}, for each $d(n)$-DLA $\A$ 
there exists a corresponding DLLA $M$ 
(described by Algorithm~\ref{Fig::SimulationAlgorithm})
that simulates $\A$.      
By Lemma~\ref{th:AmortizedAnalysis}, $M$ performs 
$O(d(n)\cdot\langle U\!B\rangle(\A)\cdot n)$ steps, 
where $\langle U\!B\rangle(\A)$ is defined by Eq.~\eqref{eq:UB}.
By Lemma~\ref{lem:UB}, $\langle U\!B\rangle(\A) = O(|Q_\A|) = O(k)$. 
To simulate $M$ on a RAM model we use 
Simulation Algorithm~\ref{Fig::SimulationAlgorithm}
with subroutines from Algorithms~\ref{Fig:Alg:FindPath} 
and~\ref{Alg:d-comp-refactored}.
Before running the simulation we preprocess the description of $\A$ 
and store it in RAM, which takes $O(m)$ time.
Combining all these bounds, we obtain the claimed complexity 
$O(k \cdot n \cdot d(n) + m)$.
\end{proof}

\begin{corollary}
The recognition problem for a $d(n)$-DLA $\A$ is solvable in $O(n\cdot d(n))$ time.  
In particular, for each fixed $d \in \NN$, every $d$-DCFL is recognizable in linear time. 
\end{corollary}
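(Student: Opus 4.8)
The plan is to obtain the corollary as an immediate specialization of Theorem~\ref{thm:main} to the case where the automaton $\A$ is held fixed. First I would observe that in the recognition problem, by definition, only the input word $w$ varies while $\A$ is a constant of the problem; hence $k=|Q_\A|$ and the description length $m$ are both $O(1)$, independent of $n$. Plugging these into the bound $O(k\cdot n\cdot d(n)+m)$ of Theorem~\ref{thm:main} collapses the additive $O(m)$ term and the factor $k$, leaving exactly $O(n\cdot d(n))$. The one hypothesis of Theorem~\ref{thm:main} that must be checked is that $d(n)$ is computable in time $O(n+m)$; since $\A$ (and hence its associated bound $d(\cdot)$) is fixed, this is part of what "$d(n)$-DLA" means in the recognition setting, and the preprocessing that loads $\A$ into RAM is $O(m)=O(1)$.

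Next I would specialize further to the announced case of a constant $d\in\NN$. Here $d(n)\equiv d$, which is trivially computable in constant time, so the computability hypothesis is satisfied, and $d(n)=O(1)$ turns the bound $O(n\cdot d(n))$ into $O(n)$. To conclude the statement about $d$-DCFLs, I would recall from the Definitions section (and from Hibbard's and Pighizzini--Pisoni's results cited there) that a $d$-DCFL is, by definition, a language recognized by some deterministic $d$-limited automaton, i.e. exactly a $d$-DLA in our formalism; applying the linear-time recognition bound to that fixed automaton gives the claim.

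I do not expect any real obstacle: the corollary is essentially an unpacking of the "fixed~$\A$" case of Theorem~\ref{thm:main}. The only points worth making explicit in the write-up are (i) that "recognition problem" pins down $k$ and $m$ as constants, so the $+m$ term and the $k$ factor disappear, and (ii) that the computability requirement on $d(n)$ is vacuous for constant $d$ and is subsumed into the fixed description otherwise.
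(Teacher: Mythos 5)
Your proposal is correct and matches the paper's (implicit) derivation: the corollary is stated as an immediate specialization of Theorem~\ref{thm:main} to a fixed automaton, so $k$ and $m$ become constants and the bound collapses to $O(n\cdot d(n))$, hence $O(n)$ for constant $d$. Your extra care about the computability hypothesis on $d(n)$ is a reasonable clarification but not a departure from the paper's route.
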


\section{Upper bound on $d(n)$-DLA runtime}

In this section we establish an upper bound on the runtime of a $d(n)$-DLA in the classical simulation model (as defined above). 

\begin{theorem}\label{Thm:UpperBound}
A $d(n)$-DLA with $k$ states performs at most $O(d(n)\cdot n^2\cdot k)$ steps on an input of length $n$, 
provided the computation does not enter an infinite loop. 
\end{theorem}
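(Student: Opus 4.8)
The plan is to split any halting (non-looping) computation into two kinds of moves and bound each separately. Call a visit to a cell \emph{active} if at the moment of the visit the cell is still writable --- for a $d$-DLA this means its current rank is $<d$, and for a $d(n)$-DLA it means the visit is among the first $d(n)$ visits to that cell --- and \emph{passive} otherwise. Since every non-endmarker cell admits at most $d(n)$ active visits and the endmarkers are never writable, the total number of active moves in the whole run is at most $n\cdot d(n)$. What remains is to bound the passive moves.

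The key observation is that the tape is completely frozen between two consecutive active moves: every visit strictly inside such an interval is passive, so it alters neither the scanned symbol nor (in the $d(n)$-model) any cell counter the machine could later react to; and cells that are not yet read-only are not touched at all inside the interval, since touching one would itself be an active move. Hence during one such \emph{passive phase} the automaton behaves exactly like a deterministic two-way finite automaton on a fixed word of length $n+2$, and its full configuration is determined by the pair (state, head position). There are only $k\cdot(n+2)$ such pairs, so a phase of length exceeding $k\cdot(n+2)$ would repeat a configuration; by determinism, and because the entire machine state recurs verbatim at the repeat, the computation from that point would be strictly periodic forever, never making another active move and never halting. That contradicts the hypothesis, so every passive phase has at most $k\cdot(n+2)$ moves.

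Finally I would add everything up. A halting run consists of at most $n\cdot d(n)$ active moves together with at most $n\cdot d(n)+1$ passive phases --- the phase before the first active move, the phases between consecutive active moves, and the phase after the last one --- each phase of length $O(k\cdot n)$. This gives a total of $n\cdot d(n) + \bigl(n\cdot d(n)+1\bigr)\cdot O(k\cdot n) = O(d(n)\cdot n^2\cdot k)$ moves, as claimed.

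I expect the only subtle point to be the justification that a repeated (state, head position) pair inside a phase forces a genuine \emph{infinite} loop rather than merely a local detour: one must argue that the recurring cycle cannot escape by eventually triggering an active move. This follows from the definition of a passive phase as a maximal interval free of active moves, together with the fact that, the tape and all counters being frozen throughout the phase, a recurrence of the pair (state, head position) is a recurrence of the complete machine state, whence the subsequent behaviour is periodic and never leaves the phase. Everything else is a routine accounting argument, and the resulting bound is tight, as witnessed by the standard $d$-DLA for $\{a^n b^n \mid n\geq 0\}$.
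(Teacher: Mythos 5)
Your proof is correct and follows essentially the same route as the paper's: your active/passive split is the paper's regular-move/$d(n)$-move split, the $O(k\cdot n)$ pigeonhole bound on a passive phase (repeated state--position pair on a frozen tape forces an infinite loop) is exactly the paper's argument for a series of $d(n)$-moves, and the final accounting is identical. The only cosmetic difference is that you classify endmarker visits as passive whereas the paper counts them as regular moves; either bookkeeping yields the same $O(d(n)\cdot n^2\cdot k)$ bound.
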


\begin{proof}
Suppose the head traverses a segment of tape consisting of rank-$d(n)$ symbols 
for more than $kn$ steps. Then some cell must be visited at least twice in the same state, 
since the average number of state visits per cell exceeds $k$. Hence the computation would fall into an infinite loop.

As noted in the proof of Lemma~\ref{Th::Correctness}, $d(n)$-DLAs have two types of moves: 
\emph{regular moves}, when the head arrives at a cell of rank less than $d(n)$, 
and \emph{$d(n)$-moves}, when the head arrives at a segment consisting of rank-$d(n)$ cells. 
Any series of $d(n)$-moves cannot exceed $kn$ steps unless the computation enters a loop. 
Each such series must be preceded by a regular move, and the total number of regular moves is 
$O(n\cdot d(n))$: after each regular move the rank of the visited cell strictly increases, and it can increase only up to~$d(n)$. 

Since each regular move takes $O(1)$ steps, and each subsequent series of $d(n)$-moves 
takes $O(kn)$ steps, the total number of steps is bounded by $O(d(n)\cdot n^2\cdot k)$. 
Therefore the claim follows. \qed
\end{proof}

Theorem~\ref{Thm:UpperBound} implies that a $d$-DLA runs in $O(d\cdot k\cdot n^2)$ steps. 
This bound is asymptotically tight: for instance, the classical 
$2$-LA recognizing the language $\{a^n b^n \mid n \geq 0\}$ runs in quadratic time. 
Its behavior is as follows. The head moves right until it encounters the first $b$, 
then returns left to find the leftmost $a$ of rank~$1$ (which is then promoted to rank~$2$). 
After this $a$ is located, the automaton moves right again to check for a matching $b$ of rank~$1$; 
if found, it proceeds to the next matching $a$, and so on. 
When the automaton reaches the right endmarker~$\rendmarker$, it verifies that no $a$’s of rank~$1$ remain. 
In this case the input is accepted, and otherwise rejected.

\section*{Acknowledgments}
The author thanks Dmitry Chistikov for valuable feedback, 
for discussions of the results presented in this text, 
and for helpful suggestions that improved the presentation.

\bibliographystyle{splncs04}

\bibliography{flf_paper_lipics_short}

\end{document}